    \pgfplotsset{compat=1.18}
\DeclareMathOperator{\diag}{diag}
\DeclareMathOperator{\spanop}{Span}
\DeclareMathOperator{\rangeop}{Range}
\DeclareMathOperator{\uvecop}{uvec}
\DeclareMathOperator{\eigop}{eig}
\begin{document}

\title{Pilot Performance modeling via observer-based inverse reinforcement learning}

\author{Jared Town, Zachary Morrison, and Rushikesh Kamalapurkar \thanks{The authors are with the School of Mechanical and Aerospace Engineering, Oklahoma State University, Stillwater, OK, USA. {\tt\small \{jared.town, zachmor, rushikesh.kamalapurkar\}@okstate.edu}. This research was supported, in part, by the National Science Foundation (NSF) under award numbers 1925147 and 2027999 and the Air Force Office of Scientific Research under award number FA9550-20-1-0127. Any opinions, findings, conclusions, or recommendations detailed in this article are those of the author(s), and do not necessarily reflect the views of the sponsoring agencies.}}
\maketitle
\begin{abstract}
    The focus of this paper is behavior modeling for pilots of unmanned aerial vehicles. The pilot is assumed to make decisions that optimize an unknown cost functional, which is estimated from observed trajectories using a novel inverse reinforcement learning (IRL) framework. The resulting IRL problem often admits multiple solutions. In this paper, a recently developed novel IRL observer is adapted to the pilot modeling problem. The observer is shown to converge to one of the equivalent solutions of the IRL problem. The developed technique is implemented on a quadcopter where the pilot is a linear quadratic supervisory controller that generates velocity commands for the quadcopter to travel to and hover over a desired location. Experimental results demonstrate the robustness of the method and its ability to learn equivalent cost functionals.
\end{abstract}
\begin{IEEEkeywords}
UAV Planning and Control, Optimal Control
\end{IEEEkeywords}
\section{Introduction}
Given the widespread use of small unmanned aerial systems (sUAS), quadcopters in particular, the need to manage flights efficiently in low altitude settings arises as that airspace is cluttered and turbulent. Cooperative piloting will be necessary for the guidance of these quadcopters to prevent air-to-air and air-to-obstacle collisions. Piloting a small quadcopter in a windy obstacle-laden environment is a difficult task for pilots to do without assistance. We envision a pilot-assist system that recommends paths to the pilots that are personalized to suit their preferences and skill levels. To develop such a system, pilot performance is modeled in terms of a cost functional that is learned by analyzing the pilot's control inputs. While the learned cost functional can be paired with existing optimal control techniques to generate personalized path/trajectory recommendations, optimization is not discussed in this paper. This study exclusively focuses on the cost functional estimation component of the recommendation system.
    
Taking inspiration from \cite{SCC.Xu.Tan.ea2017,SCC.Abbeel.Coates.ea2010}, we hypothesize that the pilot's skill level and preferences can be encoded in a cost functional. We then model the pilot-aircraft system as an optimal control problem where the natural tendencies and skill level \cite{SCC.Phatak.Weinert.ea1976} of the pilot are encoded into a cost functional that the pilot is assumed to optimize. We aim to recover the said cost functional using flight data.
    
Inverse reinforcement learning (IRL) is the process of measuring an ``expert's'' inputs and the resulting behavior over time and obtaining their cost functional. The said expert generates trajectories that are consistent with a given dynamic model, and is assumed to be behaving optimally with respect to some unknown cost functional \cite{SCC.Ng.Russell2000,SCC.Russell1998,SCC.Abbeel.Ng2004,SCC.Abbeel.Ng2005,SCC.Ratliff.Bagnell.ea2006,SCC.Ziebart.Maas.ea2008,SCC.Zhou.Bloem.ea2018,SCC.Levine.Popovic.ea2010,SCC.Neu.Szepesvari2007,SCC.Syed.Schapire2008,SCC.Levine.Popovic.ea2011,SCC.Mombaur.Truong.ea2010, SCC.Lian.Donge.ea2022, SCC.Self.Abudia.ea2022}. A general characteristic of such methods is that they require multiple trajectories and are computationally complex, making them unsuitable for online,  real-time implementation. To address the IRL problem in a real-time, online setting, methods such as \cite{SCC.Self.Coleman.ea2021a,SCC.Self2020, SCC.Rhinehart.Kitani2018, SCC.Herman.Fischer.ea2015, SCC.Arora.Doshi.2017a} have been developed. These methods are typically model-based and use a single continuous trajectory to learn the cost functional of an expert. A notable result is obtained in \cite{SCC.Lian.Xue.ea2021a} where an online and model-free approach that utilizes a neural network to solve the IRL problem in the presence of adversarial attacks is developed. However, this method only identifies the state penalty matrix and is unable to identify the control penalty matrix. 
    
This paper is focused on the development of an IRL formulation of the pilot modeling problem and an adaptation of the of the regularized history stack observer (RHSO) developed in \cite{town2022nonuniqueness} to solve the resulting IRL problem. It is shown in  \cite{SCC.Jean.Maslovskaya2018} that IRL problems that have a product structure have multiple linearly independent solutions. Since the linearized model of a quadcopter decouples lateral and longitudinal dynamics, it has a product structure. As a result, implementation of IRL to estimate cost functionals of quadcopter pilots requires methods such as \cite{town2022nonuniqueness} that are suited for IRL problems with multiple linearly independent solutions.
    
The method developed in \cite{town2022nonuniqueness} is an online IRL method that is capable of identifying the true cost functional of the pilot, up to a scaling factor, if the IRL problem has a unique (up to a scaling factor) solution, and an equivalent solution (that is, a cost functional that results in the same feedback matrix as the expert), if the IRL problem admits multiple linearly independent solutions. The key contribution of this paper is a formulation of the pilot modeling problem in the framework of IRL where the pilot's control inputs are velocity commands that are executed by an onboard autopilot. The formulation allows for the use of the IRL method in \cite{town2022nonuniqueness}, with minimal modification, to estimate a cost functional that models a pilot's performance.     
	    
The paper is structured as follows: Section \ref{Section: Pilot Model IRL} contains the assumed pilot model and linearized quadcopter model. Section \ref{Section: IRL} contains IRL method. Section \ref{Section: Experiments} contains experimental setup, results and discussion. Section \ref{Section: Conclusion} concludes the paper.

\section{Formulation of the pilot performance modeling problem in an IRL framework}\label{Section: Pilot Model IRL}
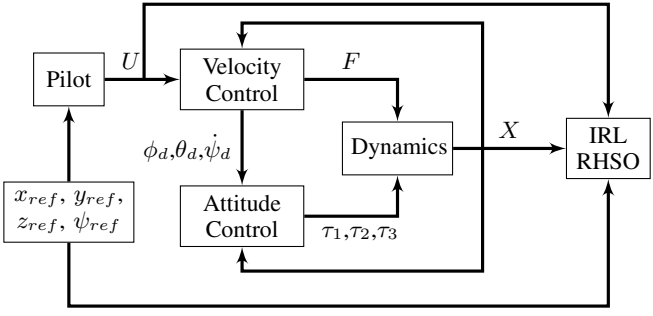
\begin{figure}
    \begin{center}
        \tikzset{
            font=\small,
            block/.style={draw,align=center,minimum height=2em,text width=3em},
            arr/.style={>=latex',very thick},
        }
        \begin{tikzpicture}
            \node [block,text width=2em, minimum height=2em] (pilot) at (-7,0) {Pilot};
            \node [block, text width=4.25em, below=1cm of pilot] (ref) {$x_{ref}$, $y_{ref}$, $z_{ref}$, $\psi_{ref}$};
            \node [block,text width=4em, right=1cm of pilot] (vctl) {Velocity Control};
            \node [block,text width=4em, below=1cm of vctl] (actl) {Attitude Control};
            \node [block,text width=3.5em, below right=0.15cm and 0.5cm of vctl] (dyn) {Dynamics};
            \node [block,text width=2.5em, right=1.5cm of dyn] (irl) {IRL RHSO};
            \draw [->,arr] (pilot) -- (vctl) node [above,pos=0.35, align=center] {$U$};%
            \draw [->,arr] (vctl) -- (actl) node [left,pos=0.5] {$\phi_d$,$\theta_d$,$\dot{\psi}_d$};
            \draw [->,arr] (vctl) -| (dyn) node [above,pos=0.25] {$F$};
            \draw [->,arr] (actl) -| (dyn) node [below,pos=0.3] {$\tau_1$,$\tau_2$,$\tau_3$};
            \draw [->,arr] (dyn) -- (irl) node [above,pos=0.5] {$X$};
            \draw [->,arr] (-6,0) -- ++(0,1) -| (irl) node [below,pos=0.5] {};
            \draw [->,arr] (-1.5,-0.9) -- ++(0,1.65) -| (vctl) node [below,pos=0.5] {};
            \draw [->,arr] (-1.5,-0.9) -- ++(0,-1.65) -| (actl) node [below,pos=0.5] {};
            \draw [->,arr] (ref) |- ++(0,-1.25) -| (irl) node [right,pos=0, align=center] {};
            \draw [->,arr] (ref)  -- (pilot) node [right,pos=0, align=center] {};
        \end{tikzpicture}
    \end{center}
    \caption{Pilot and Quadcopter Combined Model}
    \label{fig:Model}
\end{figure}

\subsection{Problem Statement}
This study concerns a quadcopter sUAS with an onboard autopilot being flown by a human pilot via desired velocity commands. That is, from the perspective of the human pilot, the control input is the desired linear velocities of the quadcopter and the desired yaw rate. The human pilot is asked to regulate the aircraft to the origin, starting from a non-zero initial condition. The objective is to find a best-fit cost functional such that a controller that optimizes the cost functional results in trajectories that are similar to those observed under human control. 

In this proof-of-concept study, we assume that the human pilot can observe the full state of the quadcopter and the experimental study utilizes supervisory LQR controllers as surrogates in lieu of human pilots. The control commands sent to the aircraft by the LQR surrogates, along with the full state of the quadcopter, are used to learn the surrogate pilot's cost functional using an observer-based inverse reinforcement learning (IRL) algorithm. Since the IRL problem formulated in Section \ref{Section: Pilot Model IRL} admits multiple solutions, we aim to recover an equivalent cost functional, per Definition \ref{Definition: Equivalent Solution Pilot}.

\subsection{Pilot Model}
The pilot-controlled system is assumed to be a linear time-invariant system of the form
\begin{equation}\label{Equation: Pilot System}
    \dot{X}(t) = AX+BU,
\end{equation}
where the state is $X\in\mathbb{R}^{12}$ and the control input is $U\in\mathbb{R}^{4}$. The system matrices are given as $A \in \mathbb{R}^{12\times 12}$ and $B \in \mathbb{R}^{12\times 4}$. The pilot is assumed to employ an optimal controller that optimizes the cost functional
\begin{equation}\label{Equation: Pilot Cost Fun}
    J(X_0,U({\cdot}))=
    \int_{0}^{\infty} \left(X(t)^{\top}QX(t) + U(t)^{\top}RU(t)\right)dt, 
\end{equation}
where $X(\cdot)$ is the system trajectory under the control signal $U(\cdot)$, starting from the initial condition $X_0$, $Q \in\mathbb{R}^{12\times 12}$ is an unknown positive semi-definite matrix, and $R\in\mathbb{R}^{4\times 4}$ is an unknown positive definite matrix.

\begin{assumption}\label{Assumption: Stab_Det}
    The pair $(A,B)$ is stabilizable and $(A,\sqrt{Q})$ is detectable. 
\end{assumption}
Stabilizability of $(A,B)$ and detectability of $(A,\sqrt{Q})$ is needed for the optimal controller to exist. Linearized models of quadrotors, including the one used in the experiments presented in Section \ref{Section: Experiments}, are stabilizable. In the experiments, the pilot is assumed to penalize translational position errors and heading errors, and the resulting pair $(A,\sqrt{Q})$ is shown to satisfy the detectability condition.

The algebraic Riccati equation (ARE),
\begin{equation}\label{Equation: ARE}
    A^{\top}S+SA-SBR^{-1}B^{\top}S+Q=0,
\end{equation}
with respect to the optimal control problem described by \eqref{Equation: Pilot System} and \eqref{Equation: Pilot Cost Fun} can then be solved, which yields the optimal policy of the pilot, given by $u = K_{EP}x$, where $K_{EP} = R^{-1}B^{\top}S$.

The surrogate pilot's policy is recovered by estimating, online, and in real-time, the unknown matrices $Q$ and $R$, using the known system matrices, $A$ and $B$, given measurements of $X$ and $U$.

\subsection{Quadcopter Model}
To implement the developed model-based IRL method, a linearized quadcopter model, with velocity commands as the input, and the actual position, velocity, orientation, and angular velocity as the output needs to be developed. Such a model depends on the autopilot being used to stabilize the aircraft, and as such, knowledge of the autopilot algorithm is required to complete the model. Note that identification of the autopilot is not the focus of this study, we assume that the autopilot is able to track the commanded inputs, and aim to model the cost functional of a surrogate LQR pilot that generates velocity commands that are then implemented by the autopilot.

The model used in this study closely follows the development in \cite{SCC.Islam.Okasha.ea2017, SCC.Bouabdallah.Siegwart2007, SCC.Bouabdallah.Noth.ea2004}
The state variables of the model are 
\begin{equation*}
    X \coloneqq \left[x, y, z, \dot{x}, \dot{y}, \dot{z}, \phi, \theta, \psi, \dot{\phi}, \dot{\theta}, \dot{\psi}\right]^{\top},
\end{equation*}
where $x$, $y$, and $z$, are the translational positions, $\dot{x}$, $\dot{y}$, and $\dot{z}$, are the translational velocities. Also, $\phi$, $\theta$, and $\psi$, are the roll pitch and yaw angular positions and $\dot{\phi}$, $\dot{\theta}$, and $\dot{\psi}$ are the respective angular velocities. The control input is given by
\begin{equation*}
    U \coloneqq \left[\dot{x}_d, \dot{y}_d, \dot{z}_d, \dot{\psi}_d\right]^{\top},
\end{equation*}
where $\dot{x}_d$, $\dot{y}_d$, and  $\dot{z}_d$, are the desired translational velocities and $\dot{\psi}_d$ as the desired heading angular velocity.
The translational dynamics of a quadcopter are described in the North, East, Down (NED) coordinate frame  by \cite{SCC.Islam.Okasha.ea2017}
\begin{equation}\label{Equation: Quad Trans}
    m \begin{bmatrix}
        \ddot{x} \\ \ddot{y} \\ \ddot{z}
    \end{bmatrix} = 
    \begin{bmatrix}
        0 \\ 0 \\ mg
    \end{bmatrix} +
    R \begin{bmatrix}
      0 \\ 0 \\ -F  
    \end{bmatrix} -
    k_t \begin{bmatrix}
        \dot{x} \\ \dot{y} \\ \dot{z}
    \end{bmatrix},
\end{equation}
where $k_t$ is the aerodynamic drag, $m$ is the mass, and $g$, is the acceleration due to gravity, and $R$ is the rotational matrix where small angle approximations result in 
\begin{equation}
    R=\begin{bmatrix}
        1 & \phi\theta-\psi & \theta+\phi\psi \\
        \psi & \phi\theta\psi+1 & \theta\psi - \phi \\
        -\theta & \phi & 1
    \end{bmatrix}.
\end{equation}
The thrust, $F$, applied by the autopilot is a proportional controller
\begin{equation}\label{Equation: Quad U1}
    F = mg + mk_{p_{13}}(\dot{z}-\dot{z}_d).
\end{equation}
The rotational motion of the quadcopter is described by \cite{SCC.Bouabdallah.Noth.ea2004, SCC.Bouabdallah.Siegwart2007}
\begin{equation}\label{Equation: Quad Rot}
    \begin{aligned}
        \ddot{\phi}I_{xx} &= \dot{\theta}\dot{\psi}(I_{yy}-I_{zz}) + l\tau_1,\\
        \ddot{\theta}I_{yy} &= \dot{\phi}\dot{\psi}(I_{zz}-I_{xx}) + l\tau_2,\\
        \ddot{\psi}I_{zz} &= \dot{\theta}\dot{\phi}(I_{xx}-I_{yy}) + \tau_3,
    \end{aligned}
\end{equation}
with $I_{xx}$, $I_{yy}$, and $I_{zz}$ being moment of inertia and $\tau_1$, $\tau_2$, and $\tau_3$ being torques designed as
\begin{equation}\label{Equation: QUAD U234}
    \begin{aligned}
        \tau_1 &= k_{p_{21}}(\phi_d-\phi) - k_{d_{1}}\dot{\phi},\\
        \tau_2 &= k_{p_{22}}(\theta_d-\theta) - k_{d_{2}}\dot{\theta},\\
        \tau_3 &= k_{d_{3}}(\dot{\psi}_d-\dot{\psi}).
    \end{aligned}
\end{equation}
The desired angles $\phi_d$ and $\theta_d$, commanded by the autopilot, are given by
\begin{equation}\label{Equation: Autopilot}
    {\thickmuskip=0mu \thinmuskip=0mu \medmuskip=0mu \begin{bmatrix}
        \theta_d \\ \phi_d
    \end{bmatrix}=
    \begin{bmatrix}
        \arctan\left(\frac{ k_{p_{12}}(\dot{y}_d-\dot{y})\sin\psi+k_{p_{11}}(\dot{x}_d-\dot{x})\cos\psi }{g+k_{p_{13 }}(\dot{z}_d-\dot{z})}\right)\\
        \arctan\left(cos\theta_d\frac{ k_{p_{11}}(\dot{x}_d-\dot{x})\sin\psi-k_{p_{12}}(\dot{y}_d-\dot{y})\cos\psi }{g+k_{p_{13 }}(\dot{z}_d-\dot{z})}\right)
    \end{bmatrix},}
\end{equation}
where $k_{p_{11}}$, $k_{p_{12}}$, $k_{p_{13}}$, $k_{p_{21}}$, $k_{p_{22}}$, $k_{d_{1}}$, $k_{d_{2}}$, $k_{d_{3}}$ are control gains of the autopilot. The desired angles are simplified with small angle approximations and a linear approximation for the inverse tangent function \cite{SCC.Rajan.Wang.ea2006} to yield
\begin{equation}\label{Equation: QUAD THETAPHI}
    \begin{aligned}
        \theta_d &= \frac{\pi}{4}\left(\frac{ k_{p_{12}}(\dot{y}_d-\dot{y})\psi+k_{p_{11}}(\dot{x}_d-\dot{x})}{g+k_{p_{13 }}(\dot{z}_d-\dot{z})}\right),\\
        \phi_d &= \frac{\pi}{4}\left(\frac{ k_{p_{11}}(\dot{x}_d-\dot{x})\psi-k_{p_{12}}(\dot{y}_d-\dot{y})}{g+k_{p_{13 }}(\dot{z}_d-\dot{z})}\right).
    \end{aligned}
\end{equation}
Linearizing \eqref{Equation: Quad Trans} and \eqref{Equation: Quad Rot} about the origin, while using \eqref{Equation: Quad U1}, \eqref{Equation: QUAD U234}, and \eqref{Equation: QUAD THETAPHI}, yields the linear system
\begin{equation}\label{Equation: Quad Linear_Sys}
    \begin{aligned}
        \ddot{x} &= -g\theta-\frac{k_t}{m}\dot{x},\\
        \ddot{y} &= g\phi - \frac{k_t}{m}\dot{y},\\
        \ddot{z} &= k_{p_{13}}(\dot{z}_d-\dot{z}) - \frac{k_t}{m}\dot{z},\\
        \ddot{\phi} &= \frac{b_1\pi k_{p_{21}}k_{p_{12}}(\dot{y}-\dot{y}_d)}{4g} -b_1k_{d_{1}}\dot{\phi}-b_1k_{p_{21}}\phi,\\
        \ddot{\theta} &= \frac{b_2\pi k_{p_{22}}k_{p_{11}}(\dot{x}_d-\dot{x})}{4g} -b_2k_{d_{2}}\dot{\theta}-b_2k_{p_{22}}\theta,\\
        \ddot{\psi}, &= b_3k_{d_{3}}(\dot{\psi}_d-\dot{\psi}),
    \end{aligned}
\end{equation}
where $b_1 = \frac{l}{I_{xx}}$, $b_2 = \frac{l}{I_{yy}}$, and $b_3 = \frac{1}{I_{zz}}$, and $l$ is the length of the quadcopter arm. 

As shown in Figure \ref{fig:Model}, given measurements of the state variables, i.e., translational positions $[x,y,z]$, translational velocities $[\dot{x},\dot{y},\dot{z}]$, angular positions $[\phi,\theta,\psi]$, and angular velocities $[\dot{\phi}, \dot{\theta}, \dot{\psi}]$, and the control variables, i.e., the desired velocities $[\dot{x}_d,\dot{y}_d,\dot{z}_d]$ and yaw rate $[\dot{\psi}_d]$ commanded by the LQR surrogate pilot, we aim to find an equivalent solution ($\hat{Q}$, $\hat{S}$, $\hat{R}$) of the IRL problem according to the following definition.
\begin{definition}\label{Definition: FI}(\hspace{-0.01em}\cite{town2022nonuniqueness})\label{Definition: Equivalent Solution Pilot}
    Given $\varpi \geq 0$, a solution ($\hat{Q}$, $\hat{S}$, $\hat{R}$) to the IRL problem is called an $\varpi-$equivalent solution of the IRL problem if 
    \begin{equation*}
        \left\Vert A^{\top}\hat S+\hat S A- \hat SB\hat R^{-1}B^{\top}\hat S+\hat Q\right\Vert\leq \varpi,
    \end{equation*}
    and optimization of the performance index $J$, with $Q=\hat{Q}$ and $R=\hat{R}$, results in a feedback matrix, $\hat{K}_{p} \coloneqq \hat{R}^{-1}B^{\top} \hat{S}$, that satisfies
    \begin{equation*}
        \left\Vert \hat{K}_{p} - K_{EP}\right\Vert \leq \varpi.
    \end{equation*}
\end{definition}
\section{Inverse Reinforcement Learning}\label{Section: IRL}
This section illustrates the IRL algorithm used to estimate a cost functional that is equivalent to the pilot's cost functional. The algorithm is similar to \cite{town2022nonuniqueness}, with minor modifications to account for availability of full state measurements.

\subsection{The Regularized History Stack Observer}
The following development is a special case of the RHSO developed in \cite{town2022nonuniqueness}, where the system state is assumed to be measurable. The state estimates generated by the onboard Kalman filter are used in the experiment to obtain an equivalent solution to the IRL problem per Definition \ref{Definition: Equivalent Solution Pilot}, The RHSO is constructed as follows.
 
Using the Popov–Belevitch–Hautus (PBH) test in Theorem 14.3 of \cite{ SCC.Hespanha2009}, it can be shown that the pilot model developed in Section \ref{Section: Pilot Model IRL} satisfies the stabilizability condition in Assumption \ref{Assumption: Stab_Det}. If $Q$ is assumed to meet the detectability condition in Assumption \ref{Assumption: Stab_Det} and if the state and control trajectories, $X(\cdot)$ and $U(\cdot)$ respectively, of the quadcopter, are optimal with respect to the cost functional in \eqref{Equation: Pilot Cost Fun}, then there exists a matrix $S$ such that $Q$, $R$, $A$, $B$, and $S$ satisfy the Hamilton-Jacobi-Bellman (HJB) equation
\begin{equation}\label{Equation: IRL HJB}
    X^{\top}(t)\left(A^{\top}S+SA-SBR^{-1}B^{\top}S+Q\right)X(t)=0,
\end{equation}
and the optimal control equation
\begin{equation}\label{Equation: Optimal Ctrl}
    U(t) = -R^{-1}B^{\top}SX(t),
\end{equation}
for all $t \in\mathbb{R}_{\ge 0}$. The linear system in \eqref{Equation: Quad Linear_Sys} is comprised of two decoupled systems. If the penalty matrices $Q$ and $R$ are also decoupled, for example, diagonal, then the corresponding IRL problem admits multiple solutions \cite{SCC.Jean.Maslovskaya2018}. 

From Definition \ref{Definition: Equivalent Solution Pilot}, it can be concluded that if $\hat{Q}$, $\hat{R}$, and $\hat{S}$ are parts of a $\varpi-$equivalent solution to the IRL problem with $\varpi = 0$, and if $\hat{R}$ is invertible, then for all $t \in\mathbb{R}_{\ge 0}$,
\begin{equation}\label{Equation: IRL HJB Hat}
    X^{\top}(t)\left(A^{\top}\hat S+\hat SA-\hat SB\hat R^{-1}B^{\top}\hat S+\hat Q\right)X(t)=0
\end{equation}
and
\begin{equation}\label{Equation: Optimal Ctrl Hat}
    U(t) = -\hat R^{-1}B^{\top}\hat SX(t).
\end{equation}

Given measurements of the the state, $X$, and control signal, $U$, and estimates $\hat Q$, $\hat R$, and $\hat S$ of $Q$, $R$, and $S$, respectively, \eqref{Equation: IRL HJB Hat} and \eqref{Equation: Optimal Ctrl Hat} can be used to develop an equivalence metric that evaluates to zero if the estimates constitute an equivalent solution. Since scaling of a cost functional results in another equivalent cost functional, equivalent cost functionals can only be identified up to a scaling factor. To fix the scale, the (1,1) element of $\hat{R}$, denoted by $r_1$, is selected to be equal to one. In particular, the RHSO generates an equivalent solution  $(\hat Q,\hat R,\hat S)$  using
\begin{equation}\label{Equation: IRL Update Law}
    \dot{\hat{W}}= 
     (\Sigma^{\top}\Sigma + \epsilon I)^{-1}\Sigma^{\top}
    \left(\Sigma_u - \Sigma\hat{W}\right).
    \end{equation}
In \eqref{Equation: IRL Update Law}, $\hat{W} = \begin{bmatrix}
    \hat{W}_S^\top,& \hat{W}_Q^\top,& (\hat{W}_R^-)^\top
\end{bmatrix}^\top$, where $\hat{W}_S\in\mathbb{R}^{78}$, $\hat{W}_Q\in\mathbb{R}^{78}$, and $\hat{W}_R\in\mathbb{R}^{10}$ are weights that satisfy $(\hat{W}_{S})^{\top}\sigma_S(X) = X^{\top}\hat{S}X$, $(\hat{W}_{Q})^{\top}\sigma_Q(X) = X^{\top}\hat{Q}X$, $(\hat{W}_{R})^{\top}\sigma_{R1}(U) = U^{\top}\hat{R}U$ and $\sigma_{R2}(U)\hat{W}_{R} = \hat{R}U$, respectively, and the vector $\hat{W}_R^-$ is a copy of $\hat{W}_R$ with the first element, $r_1$, removed. The basis functions are given by
\begin{align*}
    &\!\begin{multlined}
        \sigma_S(X)=\sigma_Q(X):=[X_1^2,2X_1X_2,2X_1X_3,\ldots,2X_1X_{12},\\
        X_2^2,2X_2X_3,2X_2X_4,\ldots,X_{11}^2,\ldots,2X_{11}X_{12},X_{12}^2]^T,
    \end{multlined}\\
    &\!\begin{multlined}
        \sigma_{R1}(U):=[U_1^2,2U_1U_2,2U_1U_3,2U_1U_4,U_2^2,2U_2U_3,\\
        2U_2U_4,U_{3}^2,2U_{3}U_4,U_4^2]^T,
    \end{multlined}
\end{align*}
and
\begin{equation}
    \sigma_{R 2}(U)=
    \begin{bmatrix}
        U^{T} & 0_{1 \times 3} & 0_{1 \times 2} & 0 \\
        U_{1}e_{2,4} & \left(U^{T}\right)^{(-1)} & 0_{1 \times 2} & 0 \\
        U_{1}e_{3,4} & U_{2}e_{2,3} & \left(U^{T}\right)^{(-2)} & 0 \\
        U_{1}e_{4,4} & U_{2}e_{3,3} & U_{3}e_{2,2} & U_{4}
    \end{bmatrix},\label{eq:u_basis}
\end{equation} 
where $U^{(-j)}$ denotes the vector $U$ with the first $j$ elements removed, and $e_{i,j}$ denotes a row vector of size $j$, with a one in the $i-$th position and zeros everywhere else. The matrices $\Sigma \in \mathbb{R}^{165N}\times \mathbb{R}^{165}$ and $\Sigma_u \in \mathbb{R}^{165N}$, referred to collectively as \emph{the history stack}, are constructed as
\begin{align*}
    \Sigma \coloneqq
    \begin{bmatrix} \sigma_\delta\left(X(t_1),U(t_1)\right)\\ \sigma_{\Delta_u}\left(X(t_1),U(t_1)\right)\\
    \vdots\\
    \sigma_\delta\left(X(t_N),U(t_N)\right)\\ \sigma_{\Delta_u}\left(X(t_N),U(t_N)\right)
    \end{bmatrix},
    &&
    \Sigma_u \coloneqq
    \begin{bmatrix}
    -U_1^2(t_1)r_1\\
    -2U_1(t_1)r_1\\
    0_{m-1\times 1}
    \\
    \vdots\\
    -U_1^2(t_N)r_1\\
    -2U_1(t_N)r_1\\
    0_{m-1\times 1}
    \end{bmatrix},
\end{align*}
where $N$ is the number of time instances selected for storage and the functions $\sigma_{\delta}$ and $\sigma_{\Delta_u}$ are given by
\begin{multline}\label{Equation: IRL Sig_Delta}
    \sigma_{\delta}\left(X,U\right) =\\
    \begin{bmatrix} (AX+BU)^{\top}(\nabla_X\sigma_S(X))^{\top} & \sigma_Q(X)^{\top} & \sigma_{R1}^-(U)^{\top}
    \end{bmatrix}
\end{multline}
and
\begin{multline}\label{Equation: IRL Sig_Delta_u}
    \sigma_{\Delta_u}\left(X,U\right)=\\
    \begin{bmatrix}
     B^{\top}(\nabla_x\sigma_S(X))^{\top} & 0_{4\times 78} & 2\sigma_{R2}^-(U)
    \end{bmatrix},
\end{multline}
where $\sigma_{R2}^{-}$ is a copy of $\sigma_{R2}$ with the first column removed and $\sigma_{R1}^-$ is a copy of $\sigma_{R1}$ with the first element removed. 

Theorem \ref{Theorem: Delta Convergence IRL} below, which guarantees convergence of \eqref{Equation: IRL Update Law} to an equivalent solution, relies on the formulation of an error metric $\Delta(t) \coloneqq \Sigma_u-\Sigma\hat{W}(t)$ and its time derivative
\begin{equation}\label{Equation: Delta_dot}
    \dot{\Delta} = -\Sigma(\Sigma^T\Sigma+\epsilon I)^{-1}\Sigma^T\Delta,
\end{equation}
along with the following data informativity condition adopted from \cite{town2022nonuniqueness}.
\begin{definition}\label{Definition: FI IRL}
    The signal $(X,U)$ is called finitely informative (FI) if there exists a time instance $T >0 $ such that for some $\{t_1, t_2, \ldots , t_N\} \subset [0,T] $,
    \[
        \spanop\left\{X(t_i)\right\}_{i=1}^N = \mathbb{R}^n,
    \]
    \[
        \spanop\left\{X(t_i) X(t_i)^{\top}\right\}_{i=1}^N = \{\mathbb{Z}\in \mathbb{R}^{n\times n}| \mathbb{Z}=\mathbb{Z}^{\top}\}, \quad\text{and}
    \]
    \begin{equation}
        \Sigma_u \in \rangeop(\Sigma).\label{Equation: Sigma_u Condition}
    \end{equation}
    In addition, for a given $\epsilon > 0$, if $\min\{\eigop(X X^\top)\} > \epsilon$ and $\min\{\eigop(Z Z^\top)\} > \epsilon$, where $X \coloneqq [\hat{x}(t_1),\ldots,\hat{x}(t_N)]$, $Z \coloneqq [\uvecop(\hat x(t_1) \hat{x}^{\top}(t_1)),\ldots,\uvecop(\hat x(t_N) \hat{x}^{\top}(t_N))] \in\mathbb{R}^{\frac{n(n+1)}{2}\times N}$, and $\uvecop(\hat x(t_i) \hat{x}^{\top}(t_i)) \in \mathbb{R}^{\frac{n(n+1)}{2}}$ denotes vectorization of the upper triangular elements of the symmetric matrix $\hat x(t_i) \hat{x}^{\top}(t_i) \in\mathbb{R}^{n\times n}$, then $(\hat{x},u)$ is called \emph{$\epsilon-$finitely informative (FI)}.
\end{definition}
To implement the developed observer, a method to select the time instances $t_1,\ldots,t_N$ is needed. The convergence result summarized in Theorem \ref{Theorem: Delta Convergence IRL} relies on the existence of a time instance $\underline{T} \geq 0$ such that the three conditions in Definition \ref{Definition: FI IRL} are met. As such, any data selection algorithm that ensures the satisfaction of those three conditions can be used to implement the developed observer. In this paper, a data selection method that minimizes the condition number of $\Sigma^{\top}\Sigma + \epsilon I$ is utilized. Minimization of the condition number of $\Sigma^{\top}\Sigma + \epsilon I$ improves the accuracy of matrix inversion in the update law \eqref{Equation: IRL Update Law} and improves the convergence rate of $\Delta$ in \eqref{Equation: Delta_dot}.

In particular, the matrices $\Sigma_u$ and $\Sigma$, contained in the history stack, are recorded at specific time instances according to the following procedure. Both matrices are initialized as zero matrices. Data are then added to the matrices at a user-selected sampling interval until they are filled. Then, a condition number minimization algorithm, similar to \cite{SCC.Kamalapurkar2018}, is used to replace old data with new data, where replacement is carried out only if the post-replacement condition number of $\Sigma^{\top}\Sigma + \epsilon I$ is lower than its pre-replacement condition number. Due to the replacement procedure, the time instances $t_i$ corresponding to data stored in the history stack are piecewise constant functions of time.
        
\begin{theorem}\label{Theorem: Delta Convergence IRL}
    If there exists $\underline{T} \geq 0$ such that $\Sigma_u (t) \in \rangeop(\Sigma(t))$ for all $t\geq \underline{T}$, $\epsilon \geq 0$ is selected to ensure invertibility of $\Sigma^{\top}(t)\Sigma(t)+\epsilon I$ for all $t\geq \underline{T}$, $\min\{\eigop(X(t) X(t)^\top)\} > \epsilon$ and $\min\{\eigop(Z(t) Z(t)^\top)\} > \epsilon$, for some $\epsilon > 0$ and all $t\geq \underline{T}$, with $X$ and $Z$ as introduced in Definition \ref{Definition: FI}, and if there exist a constant $0\leq \underline{R} < \infty$ such that the matrix $\hat{R}(t)$, extracted from $\hat{W}(t)$, is invertible with $\|\hat R^{-1}(t)\|\leq \underline{R}$ for all $t\geq \underline{T}$, then the matrices $\hat{Q}$, $\hat{S}$, and $\hat{R}$, extracted from $\hat{W}$, converge to a $0-$equivalent solution of the IRL problem.
\end{theorem}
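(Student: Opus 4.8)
\emph{Proof plan.} The plan is to split the argument into two parts: first, to show that the residual $\Delta(t)=\Sigma_u(t)-\Sigma(t)\hat W(t)$ converges to zero and that $\hat W(t)$ converges to a limit $\hat W^\star$; second, to verify that the matrices $\hat Q^\star,\hat S^\star,\hat R^\star$ extracted from $\hat W^\star$ form a $0$-equivalent solution in the sense of Definition~\ref{Definition: Equivalent Solution Pilot}, by turning the spanning and eigenvalue conditions of Definition~\ref{Definition: FI IRL} into matrix statements.

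For the first part, I would work between the piecewise-constant switching instants of the history stack, where $\hat W$ obeys \eqref{Equation: IRL Update Law} and $\Delta$ obeys $\dot\Delta=-M(t)\Delta$ from \eqref{Equation: Delta_dot} with $M(t)\coloneqq\Sigma(\Sigma^\top\Sigma+\epsilon I)^{-1}\Sigma^\top\succeq0$. The first observation is that, under the hypothesis $\Sigma_u(t)\in\rangeop(\Sigma(t))$, the vector $\Delta(t)$ stays in $\rangeop(\Sigma(t))$ for all $t\ge\underline T$: this is preserved by \eqref{Equation: Delta_dot} since $M\Delta\in\rangeop(\Sigma)$, and it is preserved across a switch because $\hat W$ is continuous while $\Sigma_u$ and the column space of $\Sigma$ at the new time both contain the relevant data. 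On $\rangeop(\Sigma(t))$ the operator $M(t)$ is positive definite, and I would use the $\epsilon$-FI conditions on $X$ and $Z$ (which control the nonzero singular values of $\Sigma(t)$), together with the uniform bound on the condition number of $\Sigma^\top\Sigma+\epsilon I$ enforced by the data-selection rule, to get a uniform lower bound $\alpha>0$ on its smallest eigenvalue there. Then $V\coloneqq\tfrac12\|\Delta\|^2$ satisfies $\dot V\le-\alpha V$ between switches; since the surrogate pilot's trajectory is a convergent LQR regulation, late-time data cannot improve the condition number, so only finitely many replacements occur and $\Sigma$ is eventually constant. This gives $\|\Delta(t)\|\to0$ exponentially, hence $\|\dot{\hat W}(t)\|\le c\|\Delta(t)\|$ is integrable, $\hat W$ stays bounded and converges to some $\hat W^\star$ with $\Sigma\hat W^\star=\Sigma_u$. (Without finiteness of the switches one would instead bound the jump of $\|\Delta\|$ at each switch by the uniformly bounded stored data, as in \cite{town2022nonuniqueness}.)

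For the second part, I would expand the blocks of $\Delta$ at a stored sample $(X(t_i),U(t_i))$: up to sign and scalar factors they equal the residual $\hat R U(t_i)+B^\top\hat S X(t_i)$ of the optimal-control relation \eqref{Equation: Optimal Ctrl Hat} and the residual $\tfrac{d}{dt}(X^\top\hat S X)\big|_{t_i}+X(t_i)^\top\hat Q X(t_i)+U(t_i)^\top\hat R U(t_i)$ of \eqref{Equation: IRL HJB Hat} along the trajectory, with the normalization $r_1=1$ moved to $\Sigma_u$. Since the surrogate pilot applies a fixed linear state feedback, $U(t_i)$ and $X(t_i)$ are related by the expert feedback matrix; using $\spanop\{X(t_i)\}=\mathbb{R}^{12}$ with $\min\eigop(XX^\top)>\epsilon$, the first residual family being $O(\|\Delta\|)$ forces $\|\hat R(t)K_{EP}+B^\top\hat S(t)\|=O(\|\Delta(t)\|)$, and left-multiplying by $\hat R^{-1}$ (bounded by $\underline R$) gives $\|\hat K_p(t)-K_{EP}\|\to0$. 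Substituting $B^\top\hat S X(t_i)=-\hat R U(t_i)+O(\|\Delta\|)$ and $\dot X=AX+BU$ into the second residual family, and using $U^\top\hat R U=(\hat R U)^\top\hat R^{-1}(\hat R U)=X^\top\hat S B\hat R^{-1}B^\top\hat S X+O(\|\Delta\|)$, collapses it to the quadratic form $X(t_i)^\top\big(A^\top\hat S+\hat S A-\hat S B\hat R^{-1}B^\top\hat S+\hat Q\big)X(t_i)$; since $\spanop\{X(t_i)X(t_i)^\top\}$ is all symmetric matrices and $\min\eigop(ZZ^\top)>\epsilon$, this being $O(\|\Delta\|)$ forces $\|A^\top\hat S+\hat S A-\hat S B\hat R^{-1}B^\top\hat S+\hat Q\|\to0$. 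Passing to the limit $\hat W^\star$, using continuity and the fact that a limit of matrices whose inverses are bounded by $\underline R$ is itself invertible, the triple $(\hat Q^\star,\hat S^\star,\hat R^\star)$ satisfies both conditions of Definition~\ref{Definition: Equivalent Solution Pilot} with $\varpi=0$.

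The hard part will be the first part: getting a \emph{uniform} exponential decay rate for $\Delta$ despite the switching, fluctuating-rank history stack. One has to show both that the null space of $\Sigma(t)$ is harmless---handled by the range condition, which confines $\Delta$ to $\rangeop(\Sigma(t))$ across switches---and that the smallest nonzero singular value of $\Sigma(t)$ stays bounded away from zero uniformly in $t\ge\underline T$, which is exactly where the quantitative $\epsilon$-FI conditions on $X$ and $Z$ and the condition-number minimization in the data-selection procedure must be brought in, mirroring the estimates in \cite{town2022nonuniqueness}.
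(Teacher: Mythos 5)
Your proposal is correct and follows essentially the same route as the paper: establish $\Delta(t)\to 0$, use the $\sigma_{\Delta_u}$ block of the residual together with $\|\hat R^{-1}\|\leq\underline{R}$ and $\min\{\eigop(XX^\top)\}>\epsilon$ to conclude $\hat K_p\to K_{EP}$, then collapse the $\sigma_\delta$ block to the quadratic form $X^\top\hat M X$ and invoke $\min\{\eigop(ZZ^\top)\}>\epsilon$ to force the ARE residual to zero. The only substantive difference is that the paper delegates the first step to Theorem 7 of \cite{town2022nonuniqueness} rather than re-deriving the Lyapunov/switching analysis (and it never needs your auxiliary claims that the history stack eventually stops switching or that $\hat W$ converges to a single limit point).
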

\begin{proof}
    The proof, included here for completeness, is a slight modification of the proof of Theorem 7 and Corollary 10 of \cite{town2022nonuniqueness}. Applying Theorem 7 in \cite{town2022nonuniqueness}, with $K_4 = I$, it can be concluded that along the solutions of \eqref{Equation: IRL Update Law}, $\lim_{t\to\infty} \Delta(t) = 0$. Note that the error metric $\Delta$ can be expressed using the basis functions in \eqref{Equation: IRL Sig_Delta}, \eqref{Equation: IRL Sig_Delta_u}, and \eqref{eq:u_basis} as 
    \[
        \Delta = \begin{bmatrix} \sigma_\delta^\prime\left(X(t_1(t)),U(t_1(t))\right)\\ \sigma_{\Delta_u}^\prime\left(X(t_1(t)),U(t_1(t))\right)\\
        \vdots\\
        \sigma_\delta^\prime\left(X(t_N(t)),U(t_N(t))\right)\\ \sigma_{\Delta_u}^\prime\left(X(t_N(t)),U(t_N(t))\right)
        \end{bmatrix}\hat{W}^\prime,
    \]
    where $\hat{W}^\prime\coloneqq\begin{bmatrix} \hat{W}_S^{\top} & \hat{W}_Q^{\top} & \hat{W}_R^{\top} \end{bmatrix}^{\top}$, $\sigma_\delta^\prime(X,U) = [(AX+BU)^{\top}(\nabla_X\sigma_S^{\top}(X) \ \ \ \sigma_Q^{\top}(X) \ \ \ \sigma_{R1}^{\top}(U)]$ and $ \sigma_{\Delta_u}^\prime(X,U) = [ B^{\top}(\nabla_x\sigma_S^{\top}(X) \ \ \  0_{4\times 78} \ \ \  2\sigma_{R2}(U)]$. Using the fact that $\sigma_{\Delta_u}^\prime\left(X(t_i(t)),U(t_i(t))\right)\hat{W}^\prime(t) = \hat R(t)\tilde{K}_P(t)X(t_i(t))$, where $\tilde{K}_P(t) \coloneqq \hat{K}_{p}(t) - K_{EP} $, it can be concluded that
    \begin{multline} \label{Equation:Kp Error Bound}
        \left\Vert \tilde{K}_P(t)X(t_i(t))\right\Vert \le \\
        \left\Vert \hat{R}^{-1}(t) \sigma_{\Delta_u}^\prime\left(X(t_i(t)),U(t_i(t))\right) \hat{W}^\prime(t)\right\Vert.
    \end{multline}
    
    Given any $\varpi > 0$, if $\min\{\eigop(X(t) X(t)^\top)\} > \epsilon$ then there exists $c > 0$, independent of $t$, such that $\left\Vert \tilde{K}_P(t)X(t_i(t))\right \Vert \le \frac{\varpi}{c}$, for all $i=1,\ldots,N $, implies $\left \Vert \tilde{K}_P(t) \right \Vert \le \varpi$. Select $\overline{T}$ large enough such that for all $t\geq \overline{T}$, $\left\Vert\Delta(t)\right\Vert \leq \frac{\varpi }{2c\underline{R}}$. Then, for all $i=1,\ldots,N $, $\left\Vert\sigma_{\Delta_u}^\prime\left(X(t_i(t)),U(t_i(t))\right)\hat{W}^\prime(t)\right\Vert \leq \frac{\varpi}{c\underline{R}}$, which implies $\left\Vert \hat{R}^{-1}(t) \sigma_{\Delta_u}^\prime\left(X(t_i(t)),U(t_i(t))\right) \hat{W}^\prime(t)\right\Vert \leq \frac{\varpi }{2c} $. From \eqref{Equation:Kp Error Bound}, it follows that $\left\Vert \tilde{K}_P(t)X(t_i(t))\right \Vert \le \frac{\varpi}{c}$, and as a result, $\left \Vert \tilde{K}_P(t) \right \Vert \le \varpi$. Since $\varpi$ was arbitrary, $\lim_{t\to\infty} \hat{K}_{p}(t) = K_{EP}$.

    The function $\sigma_{\delta}^\prime$ can be expressed as
    \begin{multline}
        \sigma_{\delta}^\prime\left(X(t_i(t)), U(t_i(t))\right)\hat{W}^\prime(t) = \\
        X^{\top}(t_i(t))\hat M X(t_i(t)) + g\left(\hat{K}_{p}(t), K_{EP}\right),
    \end{multline}
    where the function $g$ satisfies\footnote{For a positive function $g$, $f=O(g)$ if there exists a constant $M$ such that $\left \Vert f(x) \right\Vert \le Mg(x), \forall x$.} $g=O\left(\left\Vert\tilde{K}_{P}(t)\right\Vert\right)$ and $\hat{M}(t)=\left(A^{\top}\hat S(t) + \hat S(t)A-\hat S(t)B\hat R^{-1}(t)B^{\top}\hat S(t)+\hat Q(t)\right)$. Using the triangle inequality, 
    \begin{multline} \label{Equation: M hat bound}
        \left\vert X^{\top}(t_i(t))\hat M(t) X(t_i(t)) \right\vert \leq \\
        \left\vert \sigma_{\delta}^\prime\left(X(t_i(t)),U(t_i(t))\right)\hat{W}^\prime\right\vert + \left\vert g\left(\hat{K}_{p}(t), K_{EP}\right) \right\Vert
    \end{multline}
    Since $\lim_{t\to\infty} \hat{K}_{p}(t) = K_{EP}$, $\lim_{t\to\infty} \Delta(t) = 0$, and $\left\vert\sigma_{\delta}^\prime\left(X(t_i(t)),U(t_i(t))\right)\hat{W}^\prime\right\vert \leq \left\Vert\Delta(t)\right\Vert$, given any $\varepsilon > 0$, the bound in \eqref{Equation: M hat bound} implies that there exists $\overline{T}\geq 0$ such that for all $t\geq \overline{T}$ and for all $i=1,\cdots,N$, $ \left\vert X^{\top}(t_i(t))\hat M(t) X(t_i(t)) \right\vert \leq \varepsilon $. 
     
    Similar to the proof of Corollary 10 in \cite{town2022nonuniqueness}, if $\min\{\eigop(Z(t) Z(t)^\top)\} > \epsilon,\forall t\geq\underline{T}$, then given $\varpi > 0$, one can construct a $\varepsilon > 0$ such that $ \left\vert X^{\top}(t_i(t))\hat M(t) X(t_i(t)) \right\vert \leq \varepsilon $, for all $i=1,\cdots,N$, implies that $\left\Vert \hat{M}(t) \right\Vert \le \varpi$. Therefore, $\lim_{t\to\infty} \left\Vert \hat{M}(t) \right\Vert = 0$, which completes the proof of the theorem.
\end{proof}        

\section{Experiments}\label{Section: Experiments}
Experimental results obtained using the developed RHSO, implemented on a quadcopter, are presented in this section. The pilot is assumed to be a surrogate LQR controller that mimics velocity commands sent by a remote controller to a quadcopter. The velocity commands are treated as desired velocities that are executed by the onboard autopilot. The pilot behavior modeling problem is reformulated as an IRL problem and the ability of the developed IRL method to learn an equivalent solution of the IRL problem using measurements of the quadcopter state and the velocity commands sent by the surrogate LQR controller is demonstrated.

\subsection{Hardware}
A custom-built quadcopter using the PX4 flight stack is utilized for the experiments. The drone frame is built using a XILO Phreakstyle Freestyle frame kit, the flight control unit is a  Holybro Kakute H7 that is connected to a ground control station through WIFI. The position and orientation is captured through a motion capture system (OptiTrack) whereas angular velocity and acceleration are measured from an onboard inertial measurement unit (IMU). Both systems have their data fused in a Kalman filter for accurate state estimation. The model parameters for this setup are $l=0.107642$ m, $I_{xx}=0.002261$ kg m$^2$, $I_{yy}=0.002824$ kg m$^2$, $I_{zz}=0.002097$ kg m$^2$, $k_t=0.01$, $g=9.81$ m/s$^2$, $m=0.579902$ kg, $k_{p_{11}}=-5.25$, $k_{p_{12}}=-5.25$, $k_{p_{13}}=3$, $k_{p_{21}}=3.5$, $k_{p_{22}}=3.5$, $k_{p_{23}}=0.35$, $k_{d_{1}}=0.4$, $k_{d_{2}}=0.4$, and $k_{d_{3}}=0.1$.

To demonstrate the applicability of the developed framework to typical quadcopter deployment scenarios where the autopilot is proprietary and unknown, this experiment utilizes the default PX4 autopilot, which is different from the autopilot in \eqref{Equation: Autopilot}. While the PX4 autopilot is able to track the velocity inputs sent by the surrogate pilot, the performance of the real quadcopter employing the PX4 autopilot is substantially different from the performance of a simulated quadcopter employing the autopilot in \eqref{Equation: Autopilot}.

To ensure that the closed-loop model presented in Section \ref{Section: Pilot Model IRL} fits the closed loop model of real quadcopter, the proportional and derivative gains in \eqref{Equation: Autopilot} are manually adjusted so that the response of the model in Section \ref{Section: Pilot Model IRL} employing the autopilot in \eqref{Equation: Autopilot} and the real quadcopter, employing the PX4 autopilot, to velocity commands sent by the surrogate LQR pilot is as close as possible.
 
\subsection{Controller Implementation}
The quadcopter is controlled via an off-board ground control station that implements the surrogate LQR pilot. The objective of the pilot is to return the quadrotor to the origin starting from a given known initial condition using velocity and yaw rate commands. The surrogate pilot implements the control policy that optimizes the cost functional in \eqref{Equation: Pilot Cost Fun}, assuming the linear closed-loop quadrotor model given in \eqref{Equation: Quad Linear_Sys}, with\footnote{The notation $\diag(v)$ represents a diagonal matrix with the elements of the vector $v$ along the diagonal.}
\begin{align}
    Q &=\diag([9.57, 6.91, 2.84, 0, 0, 0, 0, 0, 11.68, 0, 0, 0]) \text{ and }\nonumber\\
    R &= \diag([9.57, 3.48, 14.40, 0.17]).
\end{align}
The pairs $(A,B)$ and $(A, \sqrt{Q})$ are confirmed to satisfy the stabilizability and detectability conditions in Assumption \ref{Assumption: Stab_Det} using  PBH tests in Theorems 14.3 and 16.6 in \cite{ SCC.Hespanha2009}, respectively.

The cost functional is designed under the assumption that the surrogate LQR pilot only penalizes the state variables corresponding to the translational position and the heading. To reduce the number of unknown parameters, the sparsity structure of $Q$ and $R$ is assumed to be known and only the nonzero elements of $Q$ and $R$ are estimated. As a result, the number of unknown parameters in $Q$ is reduced from $78$ to $4$ and the number of unknown parameters in $R$ is reduced from $9$ to $3$, resulting in a total of $85$ unknown parameters. 

To satisfy the FI condition in Definition \ref{Definition: FI IRL}, the ground control station adds an excitation signal onto the velocity commands generated by the surrogate pilot before they are sent to the autopilot. As a result, the final commanded velocity is
\begin{equation}
    U_{cmd} = U_{exc} + U,
\end{equation}
where $U=-K_{EP}X$ is the command generated by the surrogate pilot and $U_{exc}$ is the excitation signal. The signal $U$ is recorded in the history stack.
\subsection{Methods}
A total of $13$ repeated trials are performed to gauge the performance of the developed IRL technique.  In each of the $13$ experiments, the quadcopter is started at a randomly generated hover point contained within the operating area. The surrogate LQR pilot then commands the quadcopter to fly to the origin with a $z$-offset equal to the desired flight height. To ensure that the measured costs are representative of the infinite horizon cost, the controller is run for a time horizon of $200$ s, which is more than 4 times the observed time constant of the surrogate LQR controller. The excitation signal $U_{exc}$ is composed of $4$ sets of $75$ sinusoidal signals. Each set spans a frequency range from $0.001$ Hz to $10$ Hz, with a varying frequency and a magnitude of $0.03$. 

Since the regressor $\Sigma$ is a nonlinear function of the states, relationships between persistence of excitation, number of frequencies in the excitation signal, and number of unknown parameters, well-established in linear systems theory, do not apply to this problem. Using the sufficient conditions developed for linear regressors as a heuristic guideline, the number of frequencies in the excitation signal is initially selected to be roughly equal to the number of unknown parameters, and tuned using trial and error. The magnitude of the excitation signal is also selected using trial and error in simulation. A larger magnitude excitation signal typically results in a smaller condition number of $\Sigma^{\top} \Sigma + \epsilon I$. However, larger excitation magnitudes result in longer quadcopter trajectories, which require a larger flight arena. The excitation signal selected above was tuned using a quadcopter simulator to ensure sufficiently small condition number of $\Sigma^{\top} \Sigma + \epsilon I$ while keeping the quadcopter confined within the flight arena available in the laboratory.

The RHSO is implemented with regularization parameter $\epsilon=0.002$, and data are collected at a sampling rate of $0.08$ seconds using the condition number minimization algorithm described in Section \ref{Section: Pilot Model IRL}. 
The initial guesses for the unknown weights are randomly generated to be normally distributed in the interval $[-5, 5]$.
\begin{figure}[t]
    \centering
    \begin{tikzpicture}
    \begin{axis}[
        xlabel={$t$ [s]},
        label style={font=\scriptsize},
        tick label style={font=\scriptsize},
        legend pos = north east,
        legend style={nodes={scale=0.75, transform shape}},
        enlarge y limits=0.05,
        enlarge x limits=0,
        width=0.9\linewidth,
        height=0.7\linewidth,
    ]
        \pgfplotsinvokeforeach{1,...,4}{
            \addplot+ [
                thick,
                mark=none,
                each nth point=20,
                skip coords between index={4180}{8306}
            ] table [x index=0, y index=#1] {data/Position_traj.dat};
        }
        \legend{$x(t)$, $y(t)$, $z(t)$, $\psi(t)$}
    \end{axis}
\end{tikzpicture}
    \caption{Position and heading of the quadcopter in one of the 13 experiments.}
    \label{fig:POS}
\end{figure}
\begin{figure}
    \centering
    \begin{tikzpicture}
    \begin{axis}[
        name=ax1,
        xlabel={$t$ [s]},
        label style={font=\scriptsize},
        tick label style={font=\scriptsize},
        legend pos = north east,
        legend style={nodes={scale=0.75, transform shape}},
        enlarge y limits=0.05,
        enlarge x limits=0,
        width=0.9\linewidth,
        height=0.7\linewidth,
    ]
        \pgfplotsinvokeforeach{1,...,4}{
            \addplot+ [
                thick,
                mark=none,
                each nth point=10,
                skip coords between index={4180}{8306}
            ] table [x index=0, y index=#1] {data/Velocity_Traj.dat};
        }
        \legend{$\dot x(t)$, $\dot y(t)$, $\dot z(t)$, $\dot \psi(t)$}
    \end{axis}

    \begin{axis}[
        name=ax2,
        enlarge y limits=0.05,
        enlarge x limits=0,
        at={($(ax1.south west)+(1cm,2cm)$)},
        width=0.5\linewidth,
        height=0.4\linewidth,
        xmin=0,
        xmax=5,
        tick label style={font=\tiny},
    ]
        \pgfplotsinvokeforeach{1,...,4}{
            \addplot+ [skip coords between index={201}{8306}, thick, mark=none] table [x index=0, y index=#1] {data/Velocity_Traj.dat};
        }
    \end{axis}
\end{tikzpicture}
    \caption{Linear velocity and yaw rate of the quadcopter in one of the 13 experiments.}
    \label{fig:VEL}
\end{figure}
\subsection{Results and Discussion}
The experimental results obtained from one of the 13 flight tests are shown in Figs. \ref{fig:POS}-\ref{fig:QR}. The position of the quadcopter as a function of time is shown in Fig. \ref{fig:POS}, and the linear velocity of the quadcopter as a function of time is shown in Fig. \ref{fig:VEL}. The quadcopter holds position at the origin with a $z$-offset of $1.5$ m and the velocity appears noisy due to the excitation signal. The convergence of $\Delta$ to zero (Fig. \ref{fig:DELTA})\footnote{The notation $\left\Vert\cdot\right\Vert$ is used to denote the euclidean norm when applied to a vector and the Frobenius norm when applied to a matrix.}, combined with the convergence of $\hat{K}_{p}$ to $K_{EP}$ (Fig. \ref{fig:KP}) to zero indicates that the developed technique is able to obtain an equivalent solution (per Definition \ref{Definition: Equivalent Solution Pilot}) to the IRL problem. The experimental results are thus consistent with Theorem \ref{Theorem: Delta Convergence IRL}.

Figs. \ref{fig:DELTA} and \ref{fig:KP} demonstrate that while the feedback policy of the surrogate LQR pilot is estimated correctly, the estimated cost functional is substantially different from the cost functional of the surrogate LQR pilot. This behavior is expected because the underlying IRL problem has multiple equivalent solutions. As indicated by Fig. \ref{fig:ALL_QR}, the cost functional recovered from the data in each of the $13$ experiments converges to different equivalent solutions. The particular equivalent solution recovered in each run depends on the initial guess of the unknown weights used in that run.      
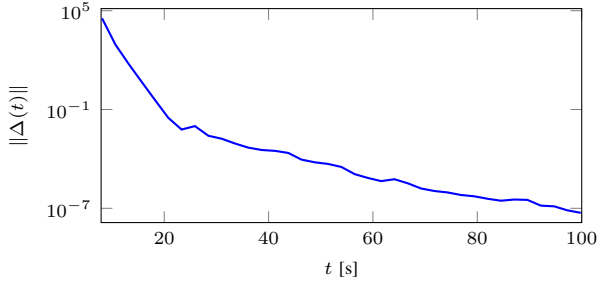
\begin{figure}[t]
    \centering
    \begin{tikzpicture}
    \begin{semilogyaxis}[
        xlabel={$t$ [s]},
        ylabel={$\left\Vert\Delta(t)\right\Vert$},
        label style={font=\scriptsize},
        tick label style={font=\scriptsize},
        legend pos = north east,
        legend style={nodes={scale=0.75, transform shape}},
        enlarge y limits=0.05,
        enlarge x limits=0.002,
        width=0.9\linewidth,
        height=0.5\linewidth,
    ]
        \addplot+ [
            thick,
            mark=none,
            each nth point=5,
            skip coords between index={185}{377}
        ] table [x index=0, y index=1] {data/Delta.dat};
    \end{semilogyaxis}
\end{tikzpicture}
    \caption{A logscale plot of the norm of $\Delta$ as a function of time in one of the $13$ experiments.}
    \label{fig:DELTA}
\end{figure}
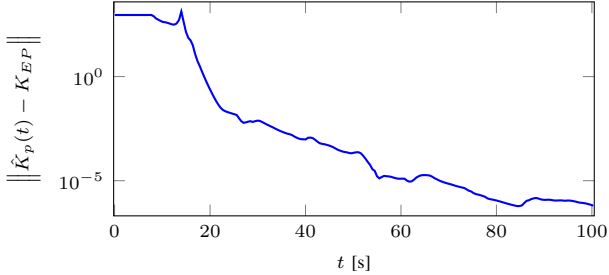
\begin{figure}
    \centering
    \begin{tikzpicture}
    \begin{semilogyaxis}[
        xlabel={$t$ [s]},
        ylabel={$\left\Vert\hat{K}_p(t) - K_{EP}\right\Vert$},
        label style={font=\scriptsize},
        tick label style={font=\scriptsize},
        legend pos = north east,
        legend style={nodes={scale=0.75, transform shape}},
        enlarge y limits=0.05,
        enlarge x limits=0.002,
        width=0.9\linewidth,
        height=0.5\linewidth,
    ]
        \addplot+ [
            thick,
            mark=none,
            each nth point=20,
            skip coords between index={4180}{8306}
        ] table [x index=0, y index=1] {data/Kp_Diff.dat};
    \end{semilogyaxis}
\end{tikzpicture}
    \caption{A logscale plot of the induced $2-$norm of the error between the estimated feedback gain and the surrogate pilot's feedback gain as a function of time in one of the $13$ experiments.}
    \label{fig:KP}
\end{figure}
\begin{table}
    \begin{center}
        \begin{tabular}{|c||c|c|}
        \hline
          & RHSO & HSO \\ 
         \hline
         \hline
         Mean$\left(\Vert K_{EP}-\hat{K}_{p} \Vert\right)$ & 2.6997e-08 & NaN  \\
         \hline
         Cov$\left(\Vert K_{EP}-\hat{K}_{p} \Vert\right)$ & 8.3316e-15 & NaN  \\
         \hline
        \end{tabular}
    \end{center}  
    \caption{The RHSO and the HSO \cite{SCC.Self.Coleman.ea2021a} are evaluated by comparing the mean and covariance of the induced $2-$ norm of $\hat{K}_{p} - K_{EP}$ for the 13 tests.}\label{Table: Results}
\end{table}   
From the 13 experiments, it is evident that RHSO finds equivalent solutions for the pilot modeling problem. A sufficiently excited system state is needed to meet the data sufficiency conditions in Definition \ref{Definition: FI IRL}. In this effort, to achieve excitation, an excitation signal is added to the surrogate LQR pilot's command. The excitation signal is designed using trial and error. It is observed in Table \ref{Table: Results} that the convergence of the estimated solution to an equivalent solution is much faster in this quadcopter pilot modeling application than the simulation results shown in \cite{town2022nonuniqueness}. We postulate that the faster convergence can be attributed to the added excitation signal increasing the information content of the data. Furthermore, as evidenced by Table \ref{Table: Results}, the original history stack observer (HSO) in \cite{SCC.Self.Coleman.ea2021a} diverges in this experiment due to nonuniqueness of solutions of the underlying IRL problem. In contrast, the RHSO converges to an equivalent solution.

Selection of the interval used to add data to the history stacks involves important trade-offs. Longer intervals allow larger changes in two subsequent recorded data points, resulting in a lower condition number of $\Sigma^{\top}\Sigma+\epsilon I$; whereas, shorter intervals allow for faster population of the history stacks, which results in better utilization of excitation naturally present during transient response, especially for problems where addition of an excitation signal is not feasible. The tuning of the RHSO also requires selection of an $\epsilon$ to ensure invertibility of $\Sigma^{\top}\Sigma+\epsilon I$. Large values of $\epsilon$ were observed to slow down the convergence rate, a phenomenon for which the authors presently lack an explanation.
        
\section{Conclusion}\label{Section: Conclusion}
The experimental results demonstrate the ability of the RHSO to consistently learn an equivalent solution for a surrogate LQR pilot's cost functional. The estimated cost functional reproduces the surrogate pilot's feedback matrix. The robustness of the algorithm is demonstrated through convergence obtained using randomly generated setpoints and initial guesses for unknown weights.
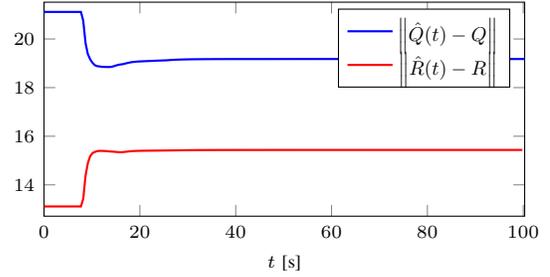
\begin{figure}[t]
    \centering
    \begin{tikzpicture}
    \begin{axis}[
        xlabel={$t$ [s]},
        label style={font=\scriptsize},
        tick label style={font=\scriptsize},
        legend pos = north east,
        legend style={nodes={scale=0.75, transform shape}},
        enlarge y limits=0.05,
        enlarge x limits=0,
        width=0.9\linewidth,
        height=0.5\linewidth,
    ]
        \addplot+ [
            thick,
            mark=none,
            each nth point=20,
            skip coords between index={4180}{8306}
        ] table [x index=0, y index=1] {data/QR_Diff.dat};
        \addplot+ [
            thick,
            mark=none,
            each nth point=20,
            skip coords between index={4153}{8306}
        ] table [x index=0, y index=2] {data/QR_Diff.dat};
        \legend{$\left\Vert\hat{Q}(t) - Q\right\Vert$, $\left\Vert\hat{R}(t) - R\right\Vert$}
    \end{axis}
\end{tikzpicture}
    \caption{A plot of the induced $2-$norm of the error between $\hat Q$ (blue) and $Q$ and $\hat R$ (red) and $R$ as a function of time in one of the 13 experiments.}
    \label{fig:QR}
\end{figure}

\begin{figure}
    \centering
    \begin{tikzpicture}
    \begin{axis}[
        xlabel={Test number},
        label style={font=\scriptsize},
        tick label style={font=\scriptsize},
        legend pos = north east,
        legend style={nodes={scale=0.75, transform shape}},
        enlarge y limits=0.05,
        enlarge x limits=0,
        width=0.9\linewidth,
        height=0.7\linewidth,
        ymax=25,
    ]
        \addplot+ [only marks] table [x index=0, y index=1] {data/ALL_Q_R.dat};
        \addplot+ [only marks] table [x index=0, y index=2] {data/ALL_Q_R.dat};
        \legend{$\left\Vert\hat{Q}(t_f) - Q\right\Vert$, $\left\Vert\hat{R}(t_f) - R\right\Vert$}
    \end{axis}
\end{tikzpicture}
    \caption{Norm of the error between $Q$ and $\hat{Q}$ and $R$ and $\hat{R}$ obtained at $t = 200$s in the 13 experiments.}
    \label{fig:ALL_QR}
\end{figure}
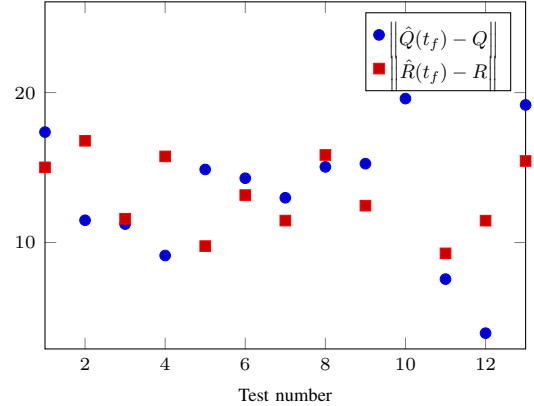

In solving the pilot modeling problem, the pilot is assumed to be an optimal controller that has full state information and transmits velocity commands to the quadcopter. The results of this paper indicate that this assumption is acceptable for the case where the pilot is a surrogate LQR controller. Further experimentation with human pilots will be required to establish the validity of this assumption in a real-world scenario.

The assumption that excitation signals can be designed so that they do not interrupt a human pilot from performing their mission is reasonable but requires careful tuning of the excitation signal so it does not become a nuisance. Validation of the assumption that the human pilot behaves like to a deterministic LQR controller needs further experimentation with human pilots. Future research will focus on experimentation involving human pilots where the developed IRL method will be used to replicate their performance by learning cost functionals equivalent to the ones being minimized by the pilots. Future work will also involve possible extensions of the developed framework to nonlinear systems and probabilistic models of pilot behavior.
        
\bibliographystyle{IEEEtran}
\bibliography{scc,sccmaster,scctemp,extra}

\begin{thebibliography}{10}
\def\url#1{}
\csname url@samestyle\endcsname
\providecommand{\newblock}{\relax}
\providecommand{\bibinfo}[2]{#2}
\providecommand{\BIBentrySTDinterwordspacing}{\spaceskip=0pt\relax}
\providecommand{\BIBentryALTinterwordstretchfactor}{4}
\providecommand{\BIBentryALTinterwordspacing}{\spaceskip=\fontdimen2\font plus
\BIBentryALTinterwordstretchfactor\fontdimen3\font minus
  \fontdimen4\font\relax}
\providecommand{\BIBforeignlanguage}[2]{{%
\expandafter\ifx\csname l@#1\endcsname\relax
\typeout{** WARNING: IEEEtran.bst: No hyphenation pattern has been}%
\typeout{** loaded for the language `#1'. Using the pattern for}%
\typeout{** the default language instead.}%
\else
\language=\csname l@#1\endcsname
\fi
#2}}
\providecommand{\BIBdecl}{\relax}
\BIBdecl

\bibitem{SCC.Xu.Tan.ea2017}
S.~Xu, W.~Tan, A.~V. Efremov, L.~Sun, and X.~Qu, ``Review of control models for
  human pilot behavior,'' \emph{Annual Reviews in Control}, vol.~44, pp.
  274--291, 2017.

\bibitem{SCC.Abbeel.Coates.ea2010}
P.~Abbeel, A.~Coates, and A.~Ng, ``Autonomous helicopter aerobatics through
  apprenticeship learning,'' \emph{Int. J. Robot. Res.}, vol.~29, no.~13, pp.
  1608--1639, 2010.

\bibitem{SCC.Phatak.Weinert.ea1976}
A.~Phatak, H.~Weinert, I.~Segall, and C.~N. Day, ``Identification of a modified
  optimal control model for the human operator,'' \emph{Automatica}, vol.~12,
  no.~1, pp. 31--41, 1976.

\bibitem{SCC.Ng.Russell2000}
A.~Y. Ng and S.~Russell, ``Algorithms for inverse reinforcement learning,'' in
  \emph{Proc. Int. Conf. Mach. Learn.}\hskip 1em plus 0.5em minus 0.4em\relax
  Morgan Kaufmann, 2000, pp. 663--670.

\bibitem{SCC.Russell1998}
S.~Russell, ``Learning agents for uncertain environments (extended abstract),''
  in \emph{Proc. Conf. Comput. Learn. Theory}, 1998.

\bibitem{SCC.Abbeel.Ng2004}
P.~Abbeel and A.~Y. Ng, ``Apprenticeship learning via inverse reinforcement
  learning,'' in \emph{Proc. Int. Conf. Mach. Learn.}, 2004.

\bibitem{SCC.Abbeel.Ng2005}
P.~Abbeel and Y.~Ng, Andrew, ``Exploration and apprenticeship learning in
  reinforcement learning,'' in \emph{Proc. Int. Conf. Mach. Learn.}, 2005.

\bibitem{SCC.Ratliff.Bagnell.ea2006}
N.~D. Ratliff, J.~A. Bagnell, and M.~A. Zinkevich, ``Maximum margin planning,''
  in \emph{Proc. Int. Conf. Mach. Learn.}, 2006.

\bibitem{SCC.Ziebart.Maas.ea2008}
B.~D. Ziebart, A.~Maas, J.~A. Bagnell, and A.~K. Dey, ``Maximum entropy inverse
  reinforcement learning,'' in \emph{Proc. AAAI Conf. Artif. Intel.}, 2008, pp.
  1433--1438.

\bibitem{SCC.Zhou.Bloem.ea2018}
Z.~Zhou, M.~Bloem, and N.~Bambos, ``Infinite time horizon maximum causal
  entropy inverse reinforcement learning,'' \emph{IEEE Trans. Autom. Control},
  vol.~63, no.~9, pp. 2787--2802, 2018.

\bibitem{SCC.Levine.Popovic.ea2010}
\BIBentryALTinterwordspacing
S.~Levine, Z.~Popovic, and V.~Koltun, ``Feature construction for inverse
  reinforcement learning,'' in \emph{Adv. Neural Inf. Process. Syst.}, J.~D.
  Lafferty, C.~K.~I. Williams, J.~Shawe-Taylor, R.~S. Zemel, and A.~Culotta,
  Eds., vol.~23.\hskip 1em plus 0.5em minus 0.4em\relax Curran Associates,
  Inc., 2010, pp. 1342--1350.
  \url{http://papers.nips.cc/paper/3918-feature-construction-for-inverse-reinforcement-learning.pdf}
\BIBentrySTDinterwordspacing

\bibitem{SCC.Neu.Szepesvari2007}
G.~Neu and C.~Szepesvari, ``Apprenticeship learning using inverse reinforcement
  learning and gradient methods,'' in \emph{Proc. Anu. Conf. Uncertain. Artif.
  Intell.}\hskip 1em plus 0.5em minus 0.4em\relax Corvallis, Oregon: AUAI
  Press, 2007, pp. 295--302.

\bibitem{SCC.Syed.Schapire2008}
\BIBentryALTinterwordspacing
U.~Syed and R.~E. Schapire, ``A game-theoretic approach to apprenticeship
  learning,'' in \emph{Adv. Neural Inf. Process. Syst.}, J.~C. Platt,
  D.~Koller, Y.~Singer, and S.~T. Roweis, Eds.\hskip 1em plus 0.5em minus
  0.4em\relax Curran Associates, Inc., 2008, pp. 1449--1456.
  \url{http://papers.nips.cc/paper/3293-a-game-theoretic-approach-to-apprenticeship-learning.pdf}
\BIBentrySTDinterwordspacing

\bibitem{SCC.Levine.Popovic.ea2011}
\BIBentryALTinterwordspacing
S.~Levine, Z.~Popovic, and V.~Koltun, ``Nonlinear inverse reinforcement
  learning with {G}aussian processes,'' in \emph{Adv. Neural Inf. Process.
  Syst.}, J.~Shawe-Taylor, R.~S. Zemel, P.~L. Bartlett, F.~Pereira, and K.~Q.
  Weinberger, Eds.\hskip 1em plus 0.5em minus 0.4em\relax Curran Associates,
  Inc., 2011, pp. 19--27.
  \url{http://papers.nips.cc/paper/4420-nonlinear-inverse-reinforcement-learning-with-gaussian-processes.pdf}
\BIBentrySTDinterwordspacing

\bibitem{SCC.Mombaur.Truong.ea2010}
K.~Mombaur, A.~Truong, and J.-P. Laumond, ``From human to humanoid
  locomotion---an inverse optimal control approach,'' \emph{Auton. Robot.},
  vol.~28, no.~3, pp. 369--383, 2010.

\bibitem{SCC.Lian.Donge.ea2022}
B.~Lian, V.~S. Donge, F.~L. Lewis, T.~Chai, and A.~Davoudi, ``Data-driven
  inverse reinforcement learning control for linear multiplayer games,''
  \emph{IEEE Trans. Neural Netw. Learn. Syst.}, 2022.

\bibitem{SCC.Self.Abudia.ea2022}
\BIBentryALTinterwordspacing
R.~V. Self, M.~Abudia, S.~M.~N. Mahmud, and R.~Kamalapurkar, ``Model-based
  inverse reinforcement learning for deterministic systems,''
  \emph{Automatica}, vol. 140, no. 110242, pp. 1--13, Jun. 2022.
  \url{https://www.sciencedirect.com/science/article/pii/S0005109822000875}
\BIBentrySTDinterwordspacing

\bibitem{SCC.Self.Coleman.ea2021a}
\BIBentryALTinterwordspacing
R.~V. Self, K.~Coleman, H.~Bai, and R.~Kamalapurkar, ``Online observer-based
  inverse reinforcement learning,'' \emph{IEEE Control Syst. Lett.}, vol.~5,
  no.~6, pp. 1922--1927, Dec. 2021.
  \url{https://ieeexplore.ieee.org/document/9302679}
\BIBentrySTDinterwordspacing

\bibitem{SCC.Self2020}
\BIBentryALTinterwordspacing
R.~V. Self, ``On model-based online inverse reinforcement learning,'' Ph.D.
  dissertation, Oklahoma State University, 2020.
  \url{https://hdl.handle.net/11244/329936}
\BIBentrySTDinterwordspacing

\bibitem{SCC.Rhinehart.Kitani2018}
N.~Rhinehart and K.~Kitani, ``First-person activity forecasting from video with
  online inverse reinforcement learning,'' \emph{IEEE Trans. Pattern Anal.
  Mach. Intell.}, vol.~42, no.~2, pp. 304--317, 2018.

\bibitem{SCC.Herman.Fischer.ea2015}
M.~Herman, V.~Fischer, T.~Gindele, and W.~Burgard, ``Inverse reinforcement
  learning of behavioral models for online-adapting navigation strategies,'' in
  \emph{Proc. IEEE Int. Conf. Robot. Autom.}, 2015, pp. 3215--3222.

\bibitem{SCC.Arora.Doshi.2017a}
S.~Arora, P.~Doshi, and B.~Banerjee, ``Online inverse reinforcement learning
  under occlusion,'' in \emph{Proc. Conf. Auton. Agents MultiAgent Syst.}\hskip
  1em plus 0.5em minus 0.4em\relax International Foundation for Autonomous
  Agents and Multiagent Systems, 2019, pp. 1170--1178.

\bibitem{SCC.Lian.Xue.ea2021a}
B.~Lian, W.~Xue, F.~L. Lewis, and T.~Chai, ``Online inverse reinforcement
  learning for nonlinear systems with adversarial attacks,'' \emph{Int. J.
  Robust Nonlinear Control}, vol.~31, no.~14, pp. 6646--6667, 2021.

\bibitem{town2022nonuniqueness}
J.~Town, Z.~Morrison, and R.~Kamalapurkar, ``Nonuniqueness and convergence to
  equivalent solutions in observer-based inverse reinforcement learning,''
  submitted to Automatica.

\bibitem{SCC.Jean.Maslovskaya2018}
F.~Jean and S.~Maslovskaya, ``Inverse optimal control problem: the
  linear-quadratic case,'' in \emph{Proc. IEEE Conf. Decis. Control}, 2018, pp.
  888--893.

\bibitem{SCC.Islam.Okasha.ea2017}
M.~Islam, M.~Okasha, and M.~M. Idres, ``Trajectory tracking in quadrotor
  platform by using {PD} controller and {LQR} control approach,'' in \emph{IOP
  Conf. Mater. Sci. Eng.}, vol. 260, no.~1, 2017, pp. 2451--2456.

\bibitem{SCC.Bouabdallah.Siegwart2007}
S.~Bouabdallah and R.~Siegwart, ``Full control of a quadrotor,'' in \emph{Proc.
  Intell. Robot. Syst.}, 2007, pp. 153--158.

\bibitem{SCC.Bouabdallah.Noth.ea2004}
S.~Bouabdallah, A.~Noth, and R.~Siegwart, ``{PID} vs {LQ} control techniques
  applied to an indoor micro quadrotor,'' in \emph{Proc. Intell. Robot. Syst.},
  vol.~3.\hskip 1em plus 0.5em minus 0.4em\relax IEEE, 2004, pp. 2451--2456.

\bibitem{SCC.Rajan.Wang.ea2006}
S.~Rajan, S.~Wang, R.~Inkol, and A.~Joyal, ``Efficient approximations for the
  arctangent function,'' \emph{IEEE Signal Process. Mag.}, vol.~23, no.~3, pp.
  108--111, 2006.

\bibitem{SCC.Hespanha2009}
J.~P. Hespanha, \emph{Linear systems theory}.\hskip 1em plus 0.5em minus
  0.4em\relax Princeton University Press, 2009.

\bibitem{SCC.Kamalapurkar2018}
\BIBentryALTinterwordspacing
R.~Kamalapurkar, ``Linear inverse reinforcement learning in continuous time and
  space,'' in \emph{Proc. Am. Control Conf.}, Milwaukee, WI, USA, Jun. 2018,
  pp. 1683--1688.  \url{https://ieeexplore.ieee.org/document/8431430/}
\BIBentrySTDinterwordspacing

\end{thebibliography}

\begin{IEEEbiography}[{\includegraphics[width=1in,height=1.25in,clip,keepaspectratio]{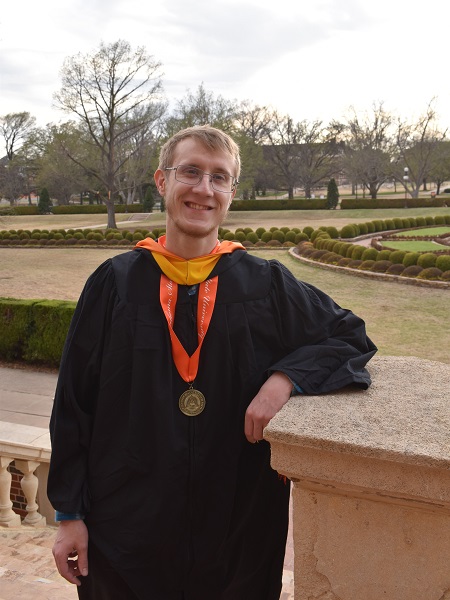}}]{Jared Town} Received his B.S. degree in Mechanical Engineering in 2021 and his M.S. degree in Mechanical and Aerospace Engineering in 2023 from Oklahoma State University. As an undergraduate, he focused on machine design, manufacturing processes, and CAD modeling. For his graduate work, he studied control theory and applications of inverse reinforcement learning. 

\end{IEEEbiography}
\begin{IEEEbiography}[{\includegraphics[width=1in,height=1.25in,clip,keepaspectratio]{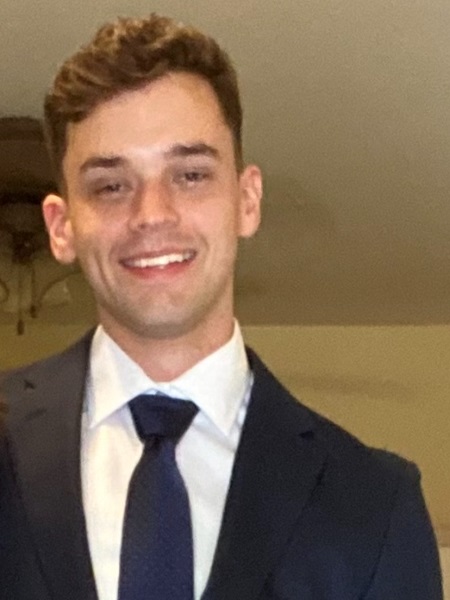}}]{Zachary Morrison}
graduated with a degree in Aerospace Engineering from Oklahoma State University in 2020. He is currently pursing an M.S. degree in Mechanical and Aerospace Engineering at Oklahoma State University.
\end{IEEEbiography}
\begin{IEEEbiography}[{\includegraphics[width=1in,height=1.25in,clip,keepaspectratio]{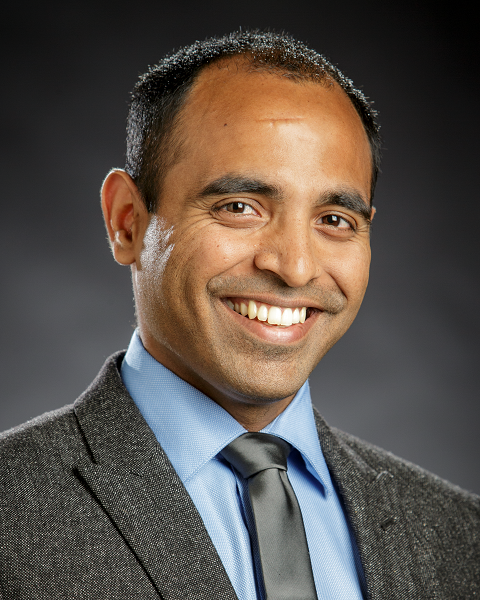}}]{Rushikesh Kamalapurkar}
received M.S. and Ph.D. degrees in 2011 and 2014, respectively, from the Department of Mechanical and Aerospace Engineering at the University of Florida. He is the director of the Systems, Cognition, and Control laboratory at Oklahoma State University. He has published a book, multiple book chapters, over 30 peer reviewed journal papers and over 30 peer reviewed conference papers.
\end{IEEEbiography}

\end{document}